\documentclass{ifacconf}

\usepackage{graphicx}      
\usepackage{natbib}        

\usepackage{amsmath} 
\usepackage{amssymb}  
\usepackage{float}
\usepackage{mathrsfs}
\usepackage{mathtools}

\newtheorem{theorem}{Theorem}[section]

\newcommand{\Eff}{{\rm{Eff}}}

\newcommand{\leftparen}{\left(}
\newcommand{\leftcurly}{\left\{}

\newcommand{\rightparen}{\right)}
\newcommand{\rightcurly}{\right\}}

\begin{document}
\begin{frontmatter}

\title{
Randomization to Balance Performance under Nominal and Compromised Communication\thanksref{footnoteinfo}}

\thanks[footnoteinfo]{Research was sponsored by the Air Force Office of Scientific Research under award number FA9550-23-1-0171, by the Army Research Office under grant number W911NF-25-1-0239, and by NASA under grant number 80NSSC25M7102.}

\author[UCCS]{Colton Hill}
\author[UCCS]{Brandon Collins}
\author[UCCS]{Philip N. Brown}

\address[UCCS]{Computer Science Department, University of Colorado at Colorado Springs, CO 80918, USA.
(e-mail: \{chill13, bcollin3, pbrown2\}@uccs.edu).
P. N. Brown performed the work while a Visiting Professor at Politecnico di Torino, Italy.}

\begin{abstract}                
It is well-known that under nominal communication, where agents are able to share information with one another, game-theoretic methods provide system-level performance guarantees in multiagent cyber-physical human systems such as distributed resource allocation.
However, approaches that depend on communication are frequently not robust when communication is compromised.
While randomized decision-making can prevent arbitrarily poor outcomes under compromised communication, these performance guarantees often do not improve when communication abilities are restored.
Thus, it remains unclear whether a single approach can perform well under nominal communication and remain robust under compromised communication.
In this paper, we study whether a system operator can inform agents about how much they should randomize their choice when selecting actions in order to balance performance across both settings.
We show that for any scenario (e.g., a multiagent resource allocation instance), there exists a level of randomization where nominal efficiency is near-optimal while compromised efficiency is strictly positive.
We then show that this guarantee weakens as agents must choose between increasingly many resources.
These results provide an initial step toward understanding randomization as a tool for robust performance across communication settings.
\end{abstract}

\begin{keyword}
Multi-agent systems; Control under communication constraints; Randomized methods; Stochastic control and game theory; Dynamic Resource Allocation
\end{keyword}

\end{frontmatter}

\section{Introduction}
Multiagent systems underpin a wide range of cyber-physical human applications, including power grids, mixed-autonomy vehicle swarms, and network routing~\citep{dorri2018multi}.
Centralized control is often infeasible, so decentralized control approaches, such as game-theoretic frameworks, offer effective methods for analyzing and designing distributed coordination in these problems~\citep{saad2012game, xu2011opportunistic, kordonis2019mechanisms, ferguson2021effectiveness}.
Fundamentally, each agent (whether machine or human) makes decisions by using some algorithm to optimize a utility function
~\citep{gopalakrishnan2011architectural, marden2012revisiting}.
This has been well-studied both in algorithm design for distributed cyber-physical human systems and as a model for human decision-making~\citep{foster1990stochastic, blume1993statistical, papadimitriou2001algorithms, alos2010logit, marden2013distributed, papadimitriou2016nash}.

A common game-theoretic paradigm models multiagent cyber-physical systems interactions as a \emph{submodular maximization game}.
Here, many utility functions render all system-optimal action profiles as Nash equilibria, and ensure that all Nash equilibria achieve a system objective value within a factor of $2$ of optimal.
That is, the \emph{price of anarchy}---the worst-case ratio of the system-level objective at Nash equilibrium, relative to the optimal system-level objective---is $1/2$~\citep{vetta2002nash}.
When these utility functions are paired with deterministic algorithms, convergence to a Nash equilibrium is guaranteed~\citep{blume1993statistical}.
However, this guarantee typically depends on a specific information assumption: each agent can monitor other agents' actions through their contributions to the system-level objective.
Under \emph{nominal} communication settings, where agents can observe the actions of others, the above results are obtained.
However, when communication is \emph{compromised} and agents can only observe their own actions, the price of anarchy worsens from~$1/2$ to~$1/n$ (where $n$ is the number of agents)~\citep{grimsman2020cost}.

Communication losses between agents are common in practice, e.g., from terrain effects, adversarial jamming, and weather.
It is well-known that when agents cannot monitor each other's actions, game-theoretic approaches are not robust~\citep{brown2018projecting}.
However, the Nash equilibria in these instances are fragile, as recent work has suggested that randomness in agents' strategies can improve system-level performance~\citep{seaton2022all,seaton2023intrinsic}.
This fact can be utilized to create randomized algorithms that drive emergent behavior to an expected system-level objective that is within~\mbox{$\sim \hspace{-0.55mm} 63\%$} of optimal.
However, these algorithms depend only on resource values and not agents' utility functions, so this performance guarantee does not improve when agents are not compromised.
Thus, two distinct approaches emerge:
\begin{enumerate}
    \item Under nominal communication, utility functions can be paired with deterministic algorithms that provide desirable system-level performance guarantees.
    
    \item Under compromised communication, randomized algorithms can prevent system-level performance from being arbitrarily poor.
\end{enumerate}

Accordingly, a system operator may have access to a utility function and learning rule pair that performs well under nominal communication and a randomization algorithm that is robust when communication is compromised.
However, if communication status changes unexpectedly and the operator is unable to change the approach being deployed in real time, the following tradeoff arises.
Select the first approach and risk losing performance guarantees when communication is compromised, or select the second approach and accept moderate performance under nominal communication.
This raises the following question:

\emph{Within a single utility-function and learning-dynamics framework, can a system operator provide a randomization guideline so performance under nominal communication is near-optimal and arbitrarily poor performance is avoided under compromised communication?}

In this paper, we study this question in the context of a \emph{submodular maximization} game, where agents evaluate resources with the \emph{marginal-contribution} utility function and optimize them via \emph{log-linear learning}~\citep{marden2012revisiting}.
We seek to understand how a system operator can select temperature (which controls how much agents randomize over suboptimal actions under log-linear learning) so that nominal performance is near-optimal while compromised performance is prevented from being arbitrarily poor.
We provide a result on the robustness of this approach, as well as a limitation.

Theorem~\ref{thm:exists_tradeoff_temperature} presents the robustness result, and shows that when agents use---or their behavior can be modeled by---marginal-contribution utility functions that are optimized via log-linear learning, temperature can be used to balance nominal efficiency with robustness to compromised communication.
That is, for any fixed game, a system operator can choose one randomization level so that nominal performance is near-optimal and compromised performance remains strictly positive.
Theorem~\ref{thm:temperature_selection_arbitrarily_bad} presents a limitation of this approach, and shows that for any fixed temperature, there exists a game with arbitrarily poor asymptotic efficiency under compromised communication.
Thus, the strength of Theorem~\ref{thm:exists_tradeoff_temperature} is agent--resource dependent, and can weaken when agents are asked to choose among too many resources.
A high-level summary is given in Fig.~\ref{fig:High_Level_Results}.

\begin{figure}[t]
    \vspace{2mm}
        \begin{center}
             \includegraphics[width=0.45\textwidth]{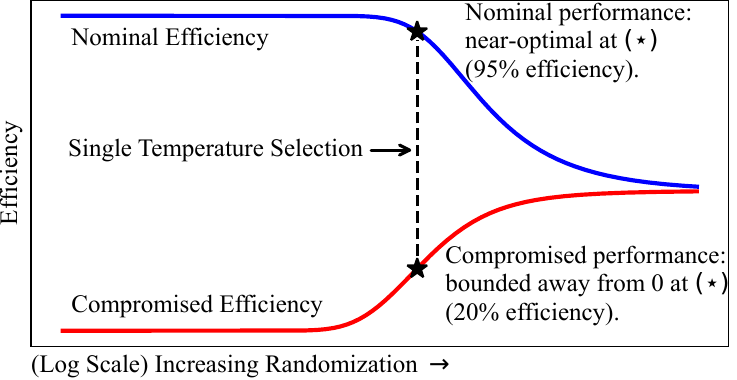}
        \end{center}
        \caption{
        The plot shows efficiency versus how much agents randomize over actions under nominal and compromised communication.
        When randomization is low, nominal efficiency is near-optimal, but compromised efficiency is arbitrarily poor.
        As randomization increases, compromised efficiency improves at the expense of nominal efficiency.
        The dashed line demonstrates Theorem~\ref{thm:exists_tradeoff_temperature} for a specific game instance: one temperature can be used under log-linear learning such that compromised efficiency is improved by $20\%$ while nominal efficiency remains within $5\%$ of optimal for the game.
        Theorem~\ref{thm:temperature_selection_arbitrarily_bad} shows this tradeoff is dependent on the number of resources, and a temperature that balances both settings cannot be found if agents are asked to choose among too many resources.
        }
        \label{fig:High_Level_Results}
    \vspace{-1mm}
\end{figure}

\section{Model}

\subsection{Multiagent Singleton Set Cover}

A multiagent resource allocation problem has agent set $N =\left\{1, \ldots, n\right\}$ and finite set of resources $\mathcal{R}$ (e.g., facilities to cover).
Each agent \mbox{$i \in N$} selects one resource from its admissible \emph{action set}, $\mathcal{A}_i \subseteq \mathcal{R}$.
The \emph{joint action space} is denoted \mbox{$\mathcal{A}:=\mathcal{A}_1 \times \dots \times \mathcal{A}_n$}, where \mbox{$a \in \mathcal{A}$} is an action profile.
We write~$a_i$ to denote the action selected by agent~$i$ in~$a$, and~$a_{-i}$ denotes the joint actions of all agents except for~$i$, i.e., \mbox{$a_{-i} = (a_1, a_2, \ldots, a_{i-1}, a_{i + 1}, \ldots, a_n)$}.
Thus, $a = (a_i, a_{-i})$, and \mbox{$R(a) = \bigcup_{i \in  N} a_i \subseteq {\mathcal{R}}$} denotes the set of resources selected in action profile~$a$.
The function \mbox{$v:\mathcal{R}\to \mathbb{R}_{\geq 0}$} maps each resource to a non-negative real number representing a resource's value.
Then, the system-level objective function, $W:\mathcal{A} \to \mathbb{R}_{\geq 0}$, is defined as
\begin{equation}
    \label{eq:Objective_Function}
    W(a):=\sum_{r \in  R(a)} v(r).
\end{equation}
To avoid trivialities, we assume $W(a)>0$ for some $a \in \mathcal{A}$, and the empty action profile $\emptyset$ satisfies $W(\emptyset) = 0$.

\subsection{Marginal-Contribution Utility Function}

To model how agents select resources with the goal of maximizing the value of the objective function in~\eqref{eq:Objective_Function}, a system operator may associate agents with a utility function $U_i : \mathcal{A} \to \mathbb{R}$ that guides their local decision-making.
A utility function provides a subjective valuation of each agent's task and reflects all information the agent has regarding the choices of other agents in the system.
We consider the \emph{marginal-contribution} utility function:
\begin{equation}
    \label{eq:marginal_contribution}
    U_i(a_i, a_{-i}) \coloneqq W(a_i, a_{-i})-W(\emptyset, a_{-i}).
\end{equation}
Intuitively, \eqref{eq:marginal_contribution} measures the change in $W$ at $a$ due to agent~$i$'s presence.
Since agents individually optimize their own utility function, a non-cooperative game is formed.

\subsubsection{Blind Agents}

When all agents are able to observe the actions of others, marginal-contribution utility functions, given by~\eqref{eq:marginal_contribution}, render all system-optimal action profiles as Nash equilibria~\citep{wolpert2001optimal}.
However, agents can suffer a fault that compromises their observation abilities.
Thus, recent work has extended the above model, and shows that Nash equilibria can be arbitrarily bad when agents are \emph{blind}, and are unable to observe the actions of others~\citep{grimsman2020cost}.
We write $B \subseteq N$ to denote the set of blind agents; each agent $i \in B$ receives no information about the actions of other agents.
To formally model communication loss for each blind agent $i \in B$, the utility function is modified to become:
\begin{equation}
    \label{eq:utility_function_comm_denied}
    \tilde{U}_i(a_i, a_{-i}) \coloneqq U_i(a_i, \emptyset) = W(a_i,\emptyset) - W(\emptyset, \emptyset),
\end{equation}
where $W(\emptyset, \emptyset) = 0$.
Intuitively,~\eqref{eq:utility_function_comm_denied} suggests that blind agents act as though other agents are not present.
Further, agent \mbox{$i \in I \coloneqq N \setminus B$} is referred to as \emph{informed}, and receives information about all other agents.

An instance of a game of the above form is specified by the tuple \mbox{$G = \leftparen N, B, \mathcal{R}, \mathcal{A}, W, \leftcurly U_i \rightcurly_{i \in N} \rightparen$}.
We refer to \mbox{$G_B \coloneqq ( N, N, \mathcal{R}, \mathcal{A}, W, \{ \tilde{U}_i \}_{i \in N} )$} as the compromised version of $G$, where all agents are blind, and refer to \mbox{$G_I \coloneqq \leftparen N, \emptyset, \mathcal{R}, \mathcal{A}, W, \leftcurly U_i \rightcurly_{i \in N} \rightparen$} as the nominal version of $G$, where all agents are informed.
Finally, we write $\mathcal{G}$ to refer to the class of multiagent singleton set cover games.

\subsection{Log-Linear Learning}

Let $a(t)$ denote an action profile for timestep \mbox{$t \in \{0, 1, 2, \ldots \}$}.
For each $t > 0$, agent $i \in N$ is chosen at random and allowed to update their current action selection.
The updating agent selects a new action according to the probability:
\begin{equation}
    \label{eq:LLL}
    \Pr[a_i(t) = a_i] =
    \frac{e^{\frac{1}{\tau} U_i^G(a_i, a_{-i}(t-1))}}{\sum_{a_i^\prime \in \mathcal{A}_i} e^{\frac{1}{\tau} U_i^G(a_i^\prime, a_{-i}(t-1))}},
\end{equation}
where $a_i \in \mathcal{A}_i$, with temperature $\tau > 0$, and $U_i^G(a_i, a_{-i})$ is the effective utility of agent~$i$ in $G$:
\begin{equation}
    \label{eq:effective_utility}
    U_i^G(a_i, a_{-i}) \coloneqq
    \begin{cases}
        U_i(a_i, a_{-i}), & i \in I,
        \\
        \tilde{U}_i(a_i, a_{-i}), & i \in B.
    \end{cases}
\end{equation}
Intuitively, $\tau$ dictates how likely agent~$i$ is to select a suboptimal action.
As $\tau \to  \infty$, all actions available to agent~$i$ are equally likely to be selected.
As $\tau \to 0$, agent~$i$ will select a best response to the action profile $a_{-i}(t-1)$ with arbitrarily high probability (or uniformly at random in the case a best response is not unique).
All remaining agents simply repeat their current action from the previous timestep, so $a_{-i}(t) = a_{-i}(t-1)$.

We refer to any procedure for selecting the temperature parameter $\tau$ as a \emph{temperature selection scheme}.
For $\mathcal{G}$, a temperature selection scheme is a map $\tau : \mathcal{G} \to \mathbb{R}_{>0}$.
For $G \in \mathcal{G}$, the selected temperature is $\tau(G)>0$, which we often write simply as $\tau$ when clear from context.

\subsection{Expected System-Level Objective Relative to Optimal}

The system operator's task is to select $\tau > 0$ such that performance is high under nominal communication and robust to communication failures.
Since the action space is finite and log-linear learning assigns positive probability to every action at each positive temperature, the induced Markov chain admits a stationary distribution, denoted~$\pi_\tau^{G}$.
The expected value of the system-level objective is
\begin{equation}
    \label{eq:exp_obj_val}
    E_{a \sim \pi_\tau^G}[W(a)] = \sum_{a \in \mathcal{A}} W(a)\,\pi_\tau^G(a).
\end{equation}
Then, for game $G$, we define the asymptotic efficiency of the system-level objective in expectation relative to an optimal action profile, $a^* \in \arg\max_{a \in \mathcal{A}} W(a)$, as
\begin{equation}
    \label{eq:Efficiency}
    \Eff_\tau(G) \coloneqq \frac{E_{a \sim \pi_\tau^G}[W(a)]}{W(a^*)}.
\end{equation}

\subsection{Star-Graph Game}\label{sec:star_game}

\begin{figure}[t]
    \vspace{2mm}
        \begin{center}
             \includegraphics[width=0.45\textwidth]{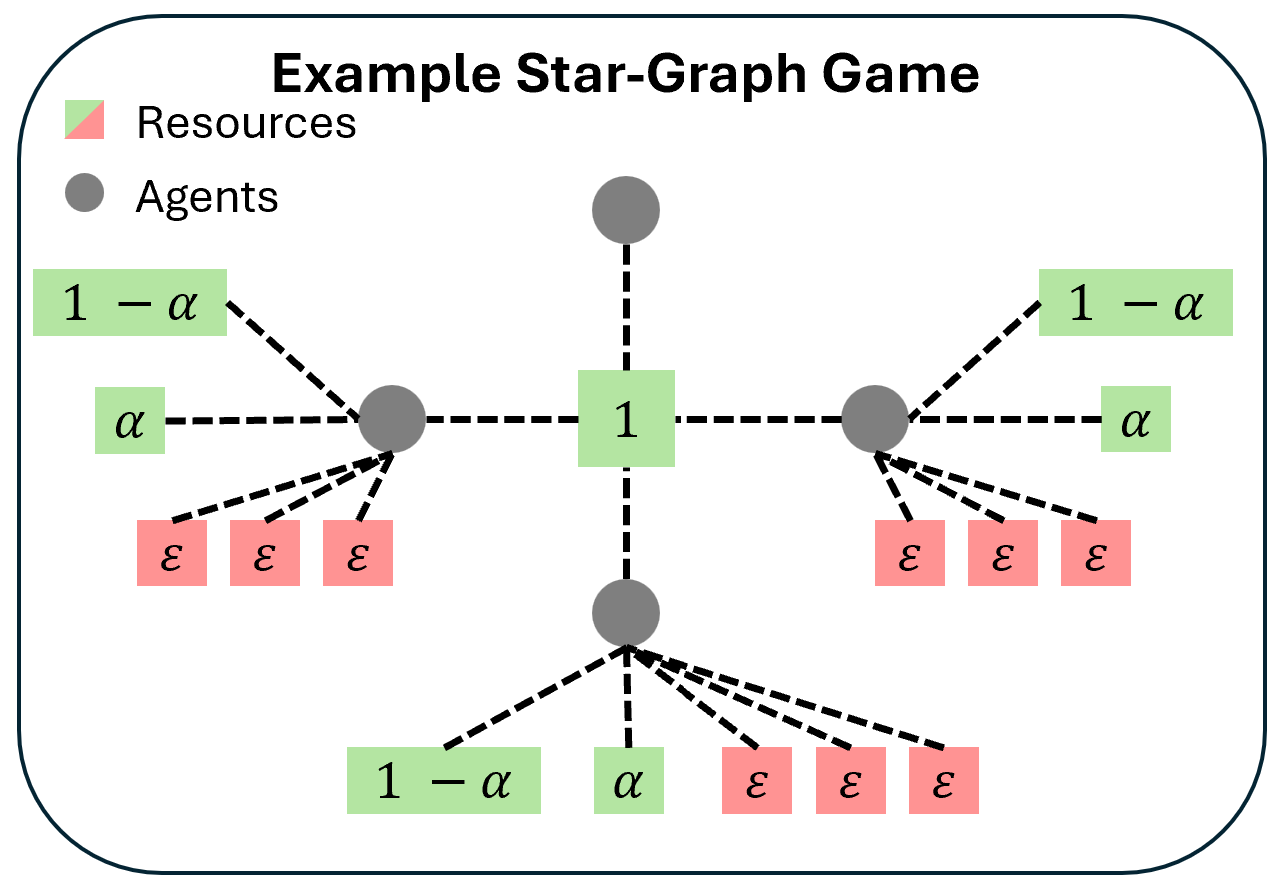}
        \end{center}
        \caption{
        $G^\star(m, n)$, where $m = 3$ and $n = 4$.
        Four agents are assigned to cover a star resource graph, where one agent's sole task is to cover a high-valued resource, and the remaining agents cover a subset of resources.
        }
        \label{fig:Example_Game}
    \vspace{-1mm}
\end{figure}

The game constructed below is used to prove Theorem~\ref{thm:temperature_selection_arbitrarily_bad} and in the numerical experiments in Section~\ref{subsec:Num_Experiments}.
It models a central high-value resource and several remote alternatives, where all agents have access to the central resource to help ensure coverage.
The set of resources consists of:
\begin{enumerate}
    \item one high-valued resource, denoted $r_1$, that is assigned a value of $1$, so $v(r_1) = 1$;
    \item $n-1$ medium-valued resources, denoted~$r_2, \ldots, r_n$, and assigned a value of $1-\alpha$, where $\alpha \in [0,0.5]$, so 
    \begin{equation}
        v(r_2) = v(r_3) = \cdots = v(r_n) = 1-\alpha;
    \end{equation}
    \item $n-1$ low-valued resources, denoted~$r_{n + 1}, \ldots, r_{2n-1}$, and assigned a value of $\alpha$, so
    \begin{equation}
        v(r_{n + 1}) = v(r_{n + 2}) = \cdots = v(r_{2n-1}) = \alpha;
    \end{equation}
    \item $m(n-1)$ negligible-value resources for $m \in \mathbb{N}$, denoted $r_{i,1}^{\varepsilon}, r_{i,2}^{\varepsilon}, \ldots, r_{i,m}^{\varepsilon}$ and assigned a value of $\varepsilon > 0$, so
    \begin{equation}
        v(r_{i,1}^{\varepsilon}) = v(r_{i,2}^{\varepsilon}) = \cdots = v(r_{i,m}^{\varepsilon}) = \varepsilon.
    \end{equation}
\end{enumerate}
So, there are $2n-1 + m(n-1)$ resources total.
We assume \mbox{$\mathcal{A}_1 = \{r_1\}$}, and for each agent $i \in N \setminus \{1\}$,
\begin{equation}
    \mathcal{A}_i = \{r_1, r_i, r_{i + n-1}, r_{i,1}^{\varepsilon}, \ldots, r_{i,m}^{\varepsilon}\},
\end{equation}
which corresponds to resource values $\leftcurly 1, 1-\alpha,\alpha, \varepsilon, \ldots, \varepsilon \rightcurly$.
We denote this game by $G^\star(m, n)$, which we often write simply as $G^\star$; an example is given in Fig.~\ref{fig:Example_Game}.

Now, $G^\star(m, n)$ admits a stationary distribution under log-linear learning with $\tau(G^\star(m, n)) > 0$, which we often write as $\tau$ for compactness.
For $S \in \{I, B\}$, denote the expected contribution of resource $r$ as $f_\tau^S(r) = v(r) \cdot \pi_\tau^S(r)$, where $\pi_\tau^S(r)$ is the probability~$r$ is selected by an agent in~$G^\star_S$.
The $1$-valued resource is always covered, and resources of value $1-\alpha$, $\alpha$, and $\varepsilon$ are agent-specific, so their expected contributions can be summed across agents.
Thus the expected value of the objective function for $G^\star_S$ is:
\begin{align}
    \nonumber
    E_{a \sim \pi_\tau^{G^\star_S}}[W(a)] \hspace{-0.5mm}=\hspace{-0.5mm} 1 \hspace{-0.5mm}+\hspace{-0.5mm} (\hspace{-0.125mm}n \hspace{-0.5mm}-\hspace{-0.5mm} 1\hspace{-0.125mm}) (\hspace{-0.125mm}f_\tau^S(\hspace{-0.125mm}1 \hspace{-0.5mm}-\hspace{-0.5mm} \alpha\hspace{-0.125mm}) \hspace{-0.5mm}+\hspace{-0.5mm} f_\tau^S(\hspace{-0.125mm}\alpha\hspace{-0.125mm}) \hspace{-0.5mm}+\hspace{-0.5mm} m f_\tau^S(\hspace{-0.125mm}\varepsilon\hspace{-0.125mm})).
\end{align}

\section{Robust Temperature Selection}

First, we study whether a system operator can use a single approach that performs well under nominal communication while avoiding arbitrarily poor performance when communication is compromised.
Theorem~\ref{thm:exists_tradeoff_temperature} shows that for any game $G \in \mathcal{G}$, there exists a temperature under log-linear learning such that nominal efficiency is near-optimal and compromised efficiency is strictly positive.


\begin{theorem}
    \label{thm:exists_tradeoff_temperature}
    Let $G \in \mathcal{G}$ be a multiagent singleton set cover game where agents are assigned the marginal-contribution utility function in~\eqref{eq:marginal_contribution} and update actions according to log-linear learning in~\eqref{eq:LLL}.
    Then, for any $\eta > 0$, there exists temperature ${\tau^\ast} > 0$ and $\delta > 0$ such that the efficiency, given by~\eqref{eq:Efficiency}, for the nominal game $G_I$ and compromised game $G_B$, respectively, satisfies
    \begin{align}
    \Eff_{\tau^\ast} \leftparen G_I \rightparen &\geq 1-\eta,
    \\
    \Eff_{\tau^\ast} \leftparen G_B \rightparen &\geq \delta.
    \end{align}
\end{theorem}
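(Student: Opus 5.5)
The plan is to fix $G$ and work with two facts about log-linear learning at positive temperature. The first is the potential-game structure of the nominal game $G_I$. The second is a strict lower bound on how often the compromised game $G_B$ covers a valuable resource.

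\textbf{Step 1 (nominal game).} With marginal-contribution utilities \eqref{eq:marginal_contribution}, $G_I$ is an exact potential game with potential $W$: for any $a_i,a_i',a_{-i}$ we have $U_i(a_i,a_{-i})-U_i(a_i',a_{-i}) = W(a_i,a_{-i})-W(a_i',a_{-i})$. It is standard (Blume; Marden--Shamma) that the log-linear chain \eqref{eq:LLL} with asynchronous uniform agent selection is reversible, with Gibbs stationary distribution
\[
\pi_\tau^{G_I}(a)=\frac{e^{W(a)/\tau}}{\sum_{a'\in\mathcal{A}}e^{W(a')/\tau}}.
\]
I would verify this by a one-line detailed-balance check. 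Let $W^\ast=W(a^\ast)>0$ and let $\Delta>0$ be the gap between $W^\ast$ and the largest non-optimal value of $W$; if every profile is optimal, nominal efficiency is trivially $1$. Then
\[
\pi_\tau^{G_I}\big(\arg\max W\big)\ge 1-|\mathcal{A}|e^{-\Delta/\tau},
\]
so $\Eff_\tau(G_I)\ge 1-|\mathcal{A}|e^{-\Delta/\tau}$. This is at least $1-\eta$ whenever $\tau\le \tau^\ast:=\Delta/\ln(|\mathcal{A}|/\eta)$, with $\tau^\ast$ set to any positive value if $|\mathcal{A}|\le\eta$.

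\textbf{Step 2 (compromised game at the same $\tau^\ast$).} In $G_B$, each agent's effective utility $\tilde U_i(a_i,a_{-i})=v(a_i)$ does not depend on $a_{-i}$. Consequently the chain factorizes: the stationary distribution is the product $\pi_{\tau}^{G_B}(a)=\prod_i p_i(a_i)$ with $p_i(r)\propto e^{v(r)/\tau}$ over $r\in\mathcal{A}_i$. This product form is stationary because each update resamples agent $i$ from exactly $p_i$, independently of the others. Now pick any resource $r^\circ\in R(a^\ast)$ with $v(r^\circ)>0$, which exists since $W^\ast>0$, and an agent $j$ with $r^\circ=a_j^\ast\in\mathcal{A}_j$. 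Because $v\ge 0$, we have $E[W(a)]\ge v(r^\circ)\Pr[a_j=r^\circ]\ge v(r^\circ)\,p_j(r^\circ)$, and
\[
p_j(r^\circ)\ge \frac{1}{|\mathcal{A}_j|\,e^{\max_r v(r)/\tau^\ast}}>0.
\]
Setting $\delta:=v(r^\circ)p_j(r^\circ)/W^\ast>0$ gives $\Eff_{\tau^\ast}(G_B)\ge\delta$. A cleaner bound is available by taking $r^\circ$ to be a maximum-value resource, but positivity is all the statement asks for.

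\textbf{Main obstacle.} The only step needing care is justifying the stationary distributions: the Gibbs form in $G_I$ and the product form in $G_B$. Uniqueness follows from irreducibility and aperiodicity, since every action has positive probability at $\tau>0$ and self-transitions are possible, as already noted in the paper. The conceptual point is that $\delta$ is allowed to depend on $G$ and $\tau^\ast$. As $\eta\to0$, $\tau^\ast\to0$ and $\delta$ may degrade like $e^{-\max v/\tau^\ast}/|\mathcal{A}_j|$. This degradation is exactly the dependence on the number of resources that Theorem~\ref{thm:temperature_selection_arbitrarily_bad} exploits, so no uniform $\delta$ should be attempted.
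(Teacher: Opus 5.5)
Your proof is correct and has the same two-part structure as the paper's: the potential-game/Gibbs concentration argument for $G_I$, then positive stationary mass on something valuable in $G_B$ at the same $\tau^\ast$.

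The differences are in execution. For $G_I$, the paper only takes the qualitative limit $\tau\to0$, whereas you give an explicit $\tau^\ast=\Delta/\ln(|\mathcal{A}|/\eta)$ from the optimality gap. For $G_B$, the paper argues from irreducibility that $\pi_{\tau^\ast}^{G_B}$ has full support and then uses $E[W]\ge W(a^\ast)\pi_{\tau^\ast}^{G_B}(a^\ast)$, without ever identifying $\pi_{\tau^\ast}^{G_B}$. You instead note that the blind utilities $\tilde U_i=v(a_i)$ do not depend on $a_{-i}$, so $\pi_\tau^{G_B}$ is the product of the single-agent softmax laws. That gives an explicit $\delta$ and makes the positivity claim self-contained. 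The paper's route is shorter; yours yields actual constants.

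One correction to your closing remark. Take $r^\circ$ to maximize $v$ over $\bigcup_i\mathcal{A}_i$, and let $j$ be an agent with $r^\circ\in\mathcal{A}_j$. Then $p_j(r^\circ)\ge 1/|\mathcal{A}_j|$ for every $\tau>0$, since $r^\circ$ has the largest softmax weight in $\mathcal{A}_j$. Also $W(a^\ast)\le n\,v(r^\circ)$, because $W(a^\ast)$ is a sum of at most $n$ resource values. Together these give $\Eff_\tau(G_B)\ge 1/(n|\mathcal{A}_j|)$ uniformly in $\tau$.

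So for a fixed game, $\delta$ need not degrade as $\eta\to0$. The $e^{-\max v/\tau^\ast}$ decay is an artifact of picking an arbitrary $r^\circ$. What Theorem~\ref{thm:temperature_selection_arbitrarily_bad} exploits is changing the game, which your product-form bound shows requires $n\max_j|\mathcal{A}_j|\to\infty$. In the star game, agent $1$ always covers $r_1$, so it is the large $n_0$ that pushes efficiency below $\delta$. Your statement that no uniform $\delta$ exists is therefore correct only for uniformity across games, not across $\eta$ within one game.
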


\textit{Proof:}
Let $G \in \mathcal{G}$, and fix $\eta > 0$.
For $a^* \in \arg\max_{a \in \mathcal{A}} W(a)$, \mbox{$W(a^*) > 0$}.
First, we show nominal game $G_I$ admits a temperature that achieves an expected efficiency of at least $1-\eta$.
Since all agents are informed in $G_I$, each agent uses the marginal-contribution utility function.
Thus, for any agent $i \in N$, any $a_{-i} \in \mathcal{A}_{-i}$, and any $a_i,a_i' \in \mathcal{A}_i$, we have
\begin{align}
    \nonumber
    U_i(a_i, a_{-i})-U_i(a_i',a_{-i}) &=
    \leftparen W(a_i, a_{-i})-W(\emptyset,a_{-i}) \rightparen
    \\ \nonumber &-
    \leftparen W(a_i', a_{-i})-W(\emptyset,a_{-i}) \rightparen
    \\ \nonumber &=
    W(a_i, a_{-i})-W(a_i', a_{-i}).
\end{align}
Hence, $G_I$ is an exact potential game with potential function~$W$.
Since $G_I$ has finite agents and resources, it is also finite.
Thus, for every $a \in \mathcal{A}$, known properties of log-linear learning~\citep{blume1993statistical,marden2012revisiting} imply that the stationary distribution is:
\begin{equation}
    \nonumber
    \pi_\tau^{G_I}(a) = \frac{e^{\frac{1}{\tau} W(a)}}{\sum_{a' \in \mathcal{A}} e^{\frac{1}{\tau} W(a')}}.
\end{equation}
As $\tau \to 0$, $\pi_\tau^{G_I}(a)$ concentrates on the action profiles that maximize $W$.
Since $a^* \in \arg\max_{a \in \mathcal{A}} W(a)$, it follows that
\begin{equation}
    \nonumber
    \lim_{\tau\to 0}
    E_{a \sim \pi_\tau^{G_I}}[W(a)] = W(a^*).
\end{equation}
Hence, $\lim_{\tau \to 0} \Eff_\tau\leftparen G_I \rightparen = 1$.
Thus, there exists a temperature ${\tau^\ast} > 0$ sufficiently small such that \mbox{$\Eff_{\tau^\ast}\leftparen G_I \rightparen \geq 1-\eta$}.
Now, consider compromised game $G_B$ under ${\tau^\ast} > 0$.
Once again, log-linear learning on $G_B$ admits a stationary distribution $\pi_{\tau^\ast}^{G_B}$.
Since \mbox{${\tau^\ast}>0$}, every action in each agent's action set is selected with positive probability when that agent updates.
Thus, every action profile has positive probability under $\pi_{\tau^\ast}^{G_B}$.
Hence, $\pi_{\tau^\ast}^{G_B}(a^*) > 0$.
Since \mbox{$W(a^*) > 0$} and \mbox{$W(a) \geq 0$} for all $a \in \mathcal{A}$, we have $E_{a\sim \pi_{\tau^\ast}^{G_B}}[W(a)] \geq W(a^*) \pi_{\tau^\ast}^{G_B}(a^*) > 0$.
Thus,
\begin{equation}
    \nonumber
    \Eff_{\tau^\ast}\leftparen G_B \rightparen = \frac{E_{a\sim \pi_{\tau^\ast}^{G_B}}[W(a)]}{W(a^*)} > 0.
\end{equation}
Now, define $\delta \coloneqq \Eff_{\tau^\ast}\leftparen G_B \rightparen$.
Thus $\delta>0$, so ${\tau^\ast}$ satisfies $\Eff_{\tau^\ast}\leftparen G_I \rightparen \geq 1-\eta$ and $\Eff_{\tau^\ast}\leftparen G_B \rightparen \geq \delta$. \hfill $\blacksquare$

Intuitively, Theorem~\ref{thm:exists_tradeoff_temperature} informs the system operator that for a specific game, randomization can be tuned to protect against compromised communication without sacrificing nominal performance.
In particular, choosing a small but positive temperature allows agents to remain close to optimal behavior under nominal communication, while still assigning positive probability to high-value action profiles when communication is compromised.
Fig.~\ref{fig:Positive_Result} illustrates this tradeoff, showing there exists a temperature that can achieve high nominal efficiency while keeping compromised efficiency bounded away from zero.

\begin{figure}[t]
    \vspace{2mm}
        \begin{center}
             \includegraphics[width=0.45\textwidth]{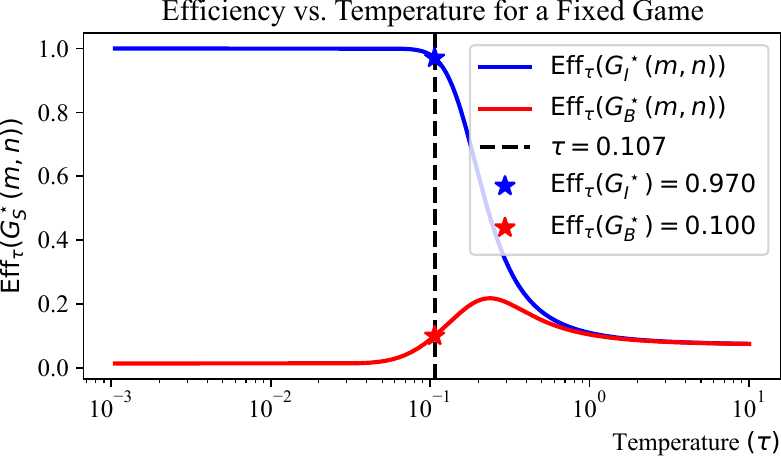}
        \end{center}
        \caption{Efficiency as a function of temperature for $G^\star(m, n)$, where $n = 100$, $m = 25$, $\alpha = 0.25$, and $\varepsilon = 0.01$.
        We plot nominal and compromised efficiency (the upper and lower plot, respectively) for $\tau \in [10^{-3}, 10]$.
        The dashed line at $\tau = 0.107$ produces near-optimal nominal performance, and compromised performance is bounded away from $0$.
        Here, nominal efficiency decreases~$3\%$ from optimal, while compromised efficiency increases~$10\%$ from its worst.
        Thus, even in a game with many negligible-value resources, a small amount of randomization obtains a robustness benefit at little cost to nominal performance.}
        \label{fig:Positive_Result}
    \vspace{-1mm}
\end{figure}

\section{Compromised Efficiency Breaks Down with Large Action Sets}

Theorem~\ref{thm:exists_tradeoff_temperature} shows that, for any game, a system operator can select a fixed temperature so that nominal efficiency is near-optimal and compromised efficiency is strictly positive.
However, the strength of the positive compromised-efficiency guarantee depends on the relationship between agents and resources.
Theorem~\ref{thm:temperature_selection_arbitrarily_bad} shows that this dependency is unavoidable: if agents are given too many resources, then any temperature selected admits a game whose compromised efficiency is arbitrarily poor.

\begin{theorem}
    \label{thm:temperature_selection_arbitrarily_bad}
    Let $\tau:\mathcal{G}\to\mathbb{R}_{>0}$ be a temperature selection scheme for multiagent singleton set cover games where agents are assigned the marginal-contribution utility function in~\eqref{eq:marginal_contribution} and update according to log-linear learning in~\eqref{eq:LLL}.
    For any $\delta > 0$, there exists $G \in \mathcal{G}$ such that
    \begin{equation}
        \label{eq:temperature_selection_arbitrarily_bad}
        \Eff_{\tau(G)} \leftparen G_B \rightparen < \delta.
    \end{equation} 
\end{theorem}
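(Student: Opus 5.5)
The plan is to use the star-graph game $G^\star(m,n)$ from Section~\ref{sec:star_game}, with $\alpha$, $\varepsilon$, $m$, $n$ chosen as functions of $\delta$ only. The key requirement is a bound on $\Eff_{\tau}(G^\star_B)$ that holds \emph{uniformly in} $\tau>0$. Then whatever value $\tau(G^\star)$ the scheme assigns, the bound still applies, and the dependence of $\tau$ on $G$ does no harm. I fix $\alpha = 1/2$ and $\varepsilon = 1/m^2$, and I take $m \geq 2$ so that $0<\varepsilon<\alpha$.

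\textbf{Step 1: stationary distribution of $G^\star_B$.} In the compromised game every agent uses $\tilde U_i(a_i,a_{-i}) = W(a_i,\emptyset)=v(a_i)$, which does not depend on $a_{-i}$. So each time agent $i$ is selected it draws a fresh action from the fixed softmax law $p_i(r)\propto e^{v(r)/\tau}$ on $\mathcal{A}_i$, independently of the current state. I will argue that the product measure $\prod_i p_i$ is invariant, which can be checked by direct substitution into the one-agent update kernel. Hence the marginal probability that agent $i\ge 2$ picks $r$ is $p_i(r)$. Agent $1$ always covers $r_1$. The resources $r_i$, $r_{i+n-1}$ and $r^\varepsilon_{i,k}$ are exclusive to agent $i$, so linearity of expectation gives, as in the displayed formula for $E[W]$ on $G^\star_S$,
\[
E_{a\sim\pi^{G^\star_B}_\tau}[W(a)] = 1 + (n-1)\bigl(\tfrac12 p(r_i) + \tfrac12 p(r_{i+n-1}) + m\varepsilon\, p(r^\varepsilon)\bigr).
\]

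\textbf{Step 2: a $\tau$-free bound per agent.} Let $Z = e^{1/\tau}+2e^{1/(2\tau)}+m e^{\varepsilon/\tau}$ be the softmax normalizer. It satisfies both $Z\ge e^{1/\tau}$ and $Z\ge m$, since $e^{\varepsilon/\tau}\ge 1$. This gives $p(r_i)=p(r_{i+n-1}) = e^{1/(2\tau)}/Z$, which is at most $e^{-1/(2\tau)}$ and also at most $e^{1/(2\tau)}/m$. The product of these two bounds is $1/m$, so their minimum is at most $m^{-1/2}$, for every $\tau>0$. The negligible resources contribute at most $m\varepsilon = 1/m$ in total. So each agent $i\ge2$ contributes at most $m^{-1/2}+m^{-1}\le 2m^{-1/2}$ in expectation.

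\textbf{Step 3: choose parameters.} The optimum is $W(a^*) = 1+(n-1)/2$: agent $1$ takes $r_1$ and every other agent takes its medium resource. Therefore
\[
\Eff_{\tau}(G^\star_B) \le \frac{1+2(n-1)m^{-1/2}}{1+(n-1)/2} \le \frac{2}{n+1} + 4m^{-1/2}
\]
for every $\tau>0$. Choose $n$ with $2/(n+1)<\delta/2$ and then $m$ with $4m^{-1/2}<\delta/2$. Setting $G = G^\star(m,n)$ gives $\Eff_{\tau(G)}(G_B)<\delta$.

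The main obstacle is Step 2, getting a bound independent of $\tau$. Small $\tau$ sends mass to the shared resource $r_1$, which causes redundancy. Large $\tau$ sends mass to the many $\varepsilon$-resources. The min/product trick handles both regimes at once. Step 1 is routine but needs care, since the invariance claim relies on the blind utilities being independent of $a_{-i}$.
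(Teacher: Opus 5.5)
Your proof is correct, and it takes a genuinely different route from the paper's. The paper fixes $\alpha\in(0,1/2]$, a small $\varepsilon$ and a large $n_0$. It then looks at the scheme's own temperatures $\tau_{m,n_0}=\tau(G^\star(m,n_0))$ as $m\to\infty$ and splits into two cases, each closed by a limit argument:
either a subsequence stays above some $c>0$, so the $m$ negligible resources absorb almost all of each blind agent's probability;
or $\tau_{m,n_0}\to 0$, so the mass piles onto the already-covered $r_1$.

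You instead encode these two failure mechanisms as the estimates $p(r_i)\le e^{-1/(2\tau)}$ and $p(r_i)\le e^{1/(2\tau)}/m$. Their product is $\tau$-free, which gives the single bound $\Eff_\tau(G^\star_B)\le 2/(n+1)+4m^{-1/2}$ for every $\tau>0$. This buys three things: an explicit, quantitative construction; no subsequence argument; and a game that depends only on $\delta$, not on the scheme. It also makes clear that the theorem is really the statement that $\sup_{\tau>0}\Eff_\tau(G_B)$ can be made arbitrarily small. The two formulations are equivalent, since a scheme may pick any $\tau$ it likes for each $G$.

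Your Step 1 also justifies the product-form stationary distribution, which the paper uses without comment. To close it, note that the chain is irreducible: every profile is reachable, since every action has positive probability. So the invariant product measure is the unique stationary distribution $\pi^{G^\star_B}_\tau$.

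The paper's case split, in turn, isolates the two mechanisms separately, which is what its fixed and cardinality-based numerical experiments illustrate. It also works for any $\alpha\in(0,1/2]$. Your choice $\alpha=1/2$ is convenient but not essential: a weighted product of the analogous two estimates gives $p\le m^{-\alpha}$ for both non-negligible private resources in general. Likewise, $\varepsilon=1/m^2$ is not needed, because for any fixed $\varepsilon$ the negligible resources contribute at most $\varepsilon$ per agent.
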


\textit{Proof:}
We show that for any~$\tau$, we can construct a game $G^\star$ as defined in Section~\ref{sec:star_game} such that compromised efficiency is arbitrarily poor.
Fix $\tau:\mathcal{G}\to\mathbb{R}_{>0}$ and $\delta>0$.
Let $\alpha \in (0,1/2]$ and choose $\varepsilon>0$ such that
\begin{equation}
    \label{eq:choose_epsilon_small}
    \frac{\varepsilon}{1-\alpha} < \frac{\delta}{4}.
\end{equation}
For each $m,n$, define $\tau_{m,n} \coloneqq \tau\leftparen G^\star(m, n) \rightparen$.
The proof proceeds by showing that for sufficiently large $n$, compromised game $G_B^\star(m,n)$ has efficiency strictly less than~$\delta$ for some~$m$.
Now, choose a fixed $n_0$ large enough such that
\begin{equation}
    \label{eq:choose_n_temperature_selection}
    \frac{1 + (n_0-1)\varepsilon}{1 + (n_0-1)(1-\alpha)} < \frac{\delta}{2},
\end{equation}
and consider the sequence of temperatures $\leftcurly \tau_{m, n_0} \rightcurly_{m=1}^ \infty$.
From here, the proof is completed in two parts.

\textit{Part~1:} Suppose there exists $c>0$ and an infinite sequence of~$m$ values such that \mbox{$\tau_{m, n_0} \geq c$}.
Now, the probabilities that a blind agent selects the $(1-\alpha)$-valued resource and the $\alpha$-valued resource are
\begin{align}
    \label{eq:pi_B_1_minus_alpha}
    \pi_{\tau_{m, n_0}}^B\hspace{-1mm}(1 \hspace{-0.25mm}-\hspace{-0.25mm} \alpha) \hspace{-0.25mm}&=\hspace{-0.25mm}
    \frac{e^{\frac{1 \hspace{-0.25mm}-\hspace{-0.25mm} \alpha}{\tau_{m, n_0}}}}{e^{\frac{1}{\tau_{m, n_0}}} \hspace{-0.5mm}+\hspace{-0.5mm} e^{\frac{1 \hspace{-0.25mm}-\hspace{-0.25mm} \alpha}{\tau_{m, n_0}}} \hspace{-0.5mm}+\hspace{-0.5mm} e^{\frac{\alpha}{\tau_{m, n_0}}} \hspace{-0.5mm}+\hspace{-0.5mm} m e^{\frac{\varepsilon}{\tau_{m, n_0}}}},
    \\ \label{eq:pi_B_alpha}
    \pi_{\tau_{m, n_0}}^B(\alpha) &=
    \frac{e^{\frac{ \alpha}{\tau_{m, n_0}}}}{e^{\frac{1}{\tau_{m, n_0}}} \hspace{-0.5mm}+\hspace{-0.5mm} e^{\frac{1 \hspace{-0.25mm}-\hspace{-0.25mm} \alpha}{\tau_{m, n_0}}} \hspace{-0.5mm}+\hspace{-0.5mm} e^{\frac{\alpha}{\tau_{m, n_0}}} \hspace{-0.5mm}+\hspace{-0.5mm} m e^{\frac{\varepsilon}{\tau_{m, n_0}}}}.
\end{align}
Since $\tau_{m, n_0}\geq c$, $e^{\frac{1}{\tau_{m, n_0}}}$, $e^{\frac{1-\alpha}{\tau_{m, n_0}}}$, and $e^{\frac{\alpha}{\tau_{m, n_0}}}$ are bounded above and $m e^{\frac{\varepsilon}{\tau_{m, n_0}}}$ is unbounded above as \mbox{$m \to  \infty$}.
Thus, $\pi_{\tau_{m, n_0}}^B(1-\alpha)\to 0$ and $\pi_{\tau_{m, n_0}}^B(\alpha)\to 0$.
Moreover,
\begin{equation}
    \nonumber
    m\pi_{\tau_{m, n_0}}^B(\varepsilon) = \frac{m e^{\frac{\varepsilon}{\tau_{m, n_0}}}}{e^{\frac{1}{\tau_{m, n_0}}} + e^{\frac{1-\alpha}{\tau_{m, n_0}}} + e^{\frac{\alpha}{\tau_{m, n_0}}} + m e^{\frac{\varepsilon}{\tau_{m, n_0}}}} \to 1,
\end{equation}
as $m \to  \infty$ along this sequence.
Thus,
\begin{equation}
    \nonumber
    \lim_{m\to \infty} E_{a\sim\pi_{\tau_{m, n_0}}^{G_B^\star(m, n_0)}}[W(a)] = 1 + (n_0-1)\varepsilon.
\end{equation}
Since the optimal objective value is
\begin{equation}
    \nonumber
    W(a^*) = 1 + (n_0-1)(1-\alpha),
\end{equation}
it follows from~\eqref{eq:choose_n_temperature_selection} that for sufficiently large~$m$,
\begin{equation}
    \nonumber 
    \Eff_{\tau_{m, n_0}}\leftparen G_B^\star(m, n_0) \rightparen < \delta.
\end{equation}

\textit{Part~2:} Suppose no such $c>0$ exists.
Then $\tau_{m, n_0}\to 0$ as $m\to \infty$.
By~\eqref{eq:pi_B_1_minus_alpha}, the probability a blind agent selects the $(1-\alpha)$-valued resource satisfies $\pi_{\tau_{m, n_0}}^B(1-\alpha) \leq e^{-\frac{\alpha}{\tau_{m, n_0}}}$.
Since $\tau_{m, n_0}\to 0$, it follows that $\pi_{\tau_{m, n_0}}^B(1-\alpha)\to 0$.
Similarly, by~\eqref{eq:pi_B_alpha}, $\pi_{\tau_{m, n_0}}^B(\alpha) \leq e^{-\frac{1-\alpha}{\tau_{m, n_0}}}$, so $\pi_{\tau_{m, n_0}}^B(\alpha)\to 0$ as $\tau_{m, n_0}\to 0$.
Finally, the total probability that a blind agent selects one of its $\varepsilon$-valued resources is at most $1$, so
\begin{equation}
    \nonumber
    m f_{\tau_{m, n_0}}^B(\varepsilon) = m \leftparen \varepsilon \pi_{\tau_{m, n_0}}^B(\varepsilon) \rightparen \leq \varepsilon.
\end{equation}
Thus, $\lim_{m\to \infty} E_{a\sim\pi_{\tau_{m, n_0}}^{G_B^\star(m, n_0)}}[W(a)] \leq 1 + (n_0-1)\varepsilon$.
Using~\eqref{eq:choose_n_temperature_selection} again, there exists sufficiently large~$m$ such that
\begin{equation}
    \nonumber 
    \Eff_{\tau_{m, n_0}}\leftparen G_B^\star(m, n_0) \rightparen < \delta.
\end{equation}
Hence, for any $\tau$ and any \mbox{$\delta>0$}, there exists a game $G^\star(m, n) \in \mathcal{G}$ such that \mbox{$\Eff_{\tau(G^\star(m, n))}\leftparen G_B^\star(m,n) \rightparen < \delta$}. \hfill $\blacksquare$

Intuitively, while Theorem~\ref{thm:exists_tradeoff_temperature} shows that for any fixed game, a system operator can choose a temperature that keeps compromised efficiency bounded away from zero, Theorem~\ref{thm:temperature_selection_arbitrarily_bad} shows that no universal lower bound exists that can be applied to any game.
Further, the proof of Theorem~\ref{thm:temperature_selection_arbitrarily_bad} demonstrates that  this limitation arises if agents are given too many resources to consider, then the compromised-efficiency guarantee can become arbitrarily small.
The following section illustrates \textit{Part~1} and \textit{Part~2} of the proof, and provides numerical experiments to demonstrate how compromised efficiency degrades on the star-graph game used in the proof.

\subsection{Numerical Experiments}\label{subsec:Num_Experiments}

\begin{figure}[t]
    \vspace{2mm}
        \begin{center}
             \includegraphics[width=0.45\textwidth]{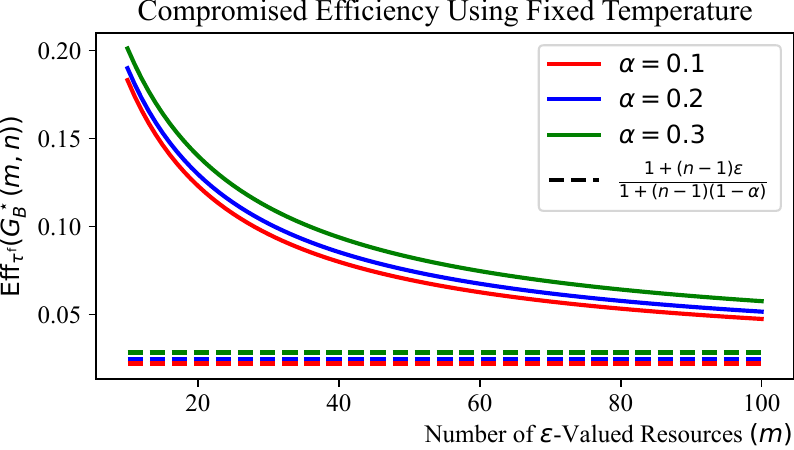}
        \end{center}
        \caption{Compromised efficiency using fixed temperature as a function of $\varepsilon$-valued resources.
        For $G^\star(m, n)$, we set $n = 100$ and $\varepsilon = 0.01$.
        For each $\alpha \in \{0.1, 0.2, 0.3\}$, we plot $\mathrm{Eff}_{\tau^{\mathrm{f}}}(G_B^\star(m, n))$ for $m \in \{10, \ldots, 100\}$, where $\tau^{\mathrm{f}}(G^\star(m, n)) = 1-\alpha$.
        The dashed lines show the asymptotic lower bound, $\frac{1 + (n-1) \varepsilon}{1 + (n-1)(1-\alpha)}$, of efficiency as the number of $\varepsilon$-valued resources grows $(m)$.
        The plot illustrates \textit{Part~1} of the proof of Theorem~\ref{thm:temperature_selection_arbitrarily_bad}, which shows that if temperature has an asymptotic lower bound (i.e., $\tau \geq c > 0$), then blind agents select a $\varepsilon$-valued resource with increasingly high probability, causing efficiency to degrade as $m$ increases.}
        \label{fig:Negative_Result_Part_1}
    \vspace{1mm}
\end{figure}

Here, we consider two temperature selection schemes, then show that games exist where agents have arbitrarily poor compromised efficiency when they randomize according to the selected temperature.
In both cases, poor performance arises because agents are asked to choose from too many resources, increasing the likelihood they choose resources that provide negligible system-level benefit.

For a game $G^\star(m, n)$, agents want to select resources with values $1-\alpha$ and $\alpha$ with high probability, and select the $\varepsilon$-valued resources with much lower probability.
To model this, we consider a ``fixed'' temperature selection scheme that scales with the payoffs of the actions agents want to prioritize.
We also consider a ``cardinality-based'' temperature selection scheme that builds on the fixed scheme, and accounts for low-value resources by decreasing logarithmically as a function of the number of available resources.
Thus, the two schemes are given by
\begin{align}
    \label{eq:fixed_temperature_selection_numerics}
    \tau^{\mathrm{f}} \leftparen G^\star(m, n) \rightparen &\coloneqq 1-\alpha,
    \\
    \label{eq:cardinality_temperature_selection_numerics}
    \tau^{\mathrm{c}} \leftparen G^\star(m, n) \rightparen &\coloneqq \frac{1-\alpha}{\ln(m + 3)}.
\end{align}

The fixed scheme ensures that resources with value $1-\alpha$ or $\alpha$ are selected with exponentially higher probability than individual $\varepsilon$-valued resources.
However, an issue arises when the number of $\varepsilon$-valued resources becomes large.
Even though each $\varepsilon$-valued resource is selected with low probability, the total probability that an agent selects some $\varepsilon$-valued resource increases as a function of~$m$.
Consequently, under the scheme in~\eqref{eq:fixed_temperature_selection_numerics}, blind agents place most of their probability mass on negligible resources, causing compromised efficiency to become arbitrarily poor.
This result is presented graphically in Fig.~\ref{fig:Negative_Result_Part_1}.

The cardinality-based scheme adjusts for this issue by decreasing temperature as agents' actions increase.
The temperature decreases logarithmically as~$m$ increases linearly, so agents should not become overly deterministic.
However, a new issue arises where the temperature becomes too small when agents are blind, so they select the $1$-valued resource with increasingly high probability since they do not know it is covered.
Thus, under the scheme in~\eqref{eq:cardinality_temperature_selection_numerics}, agents place most of their probability mass on a covered resource, causing efficiency to degrade under compromised communication.
This is shown in Fig.~\ref{fig:Negative_Result_Part_2}.

\begin{figure}[t]
    \vspace{2mm}
        \begin{center}
             \includegraphics[width=0.45\textwidth]{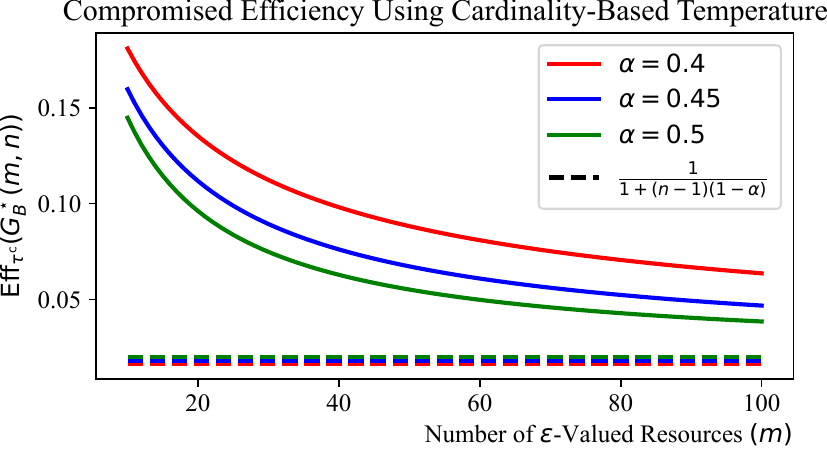}
        \end{center}
        \caption{Compromised efficiency using cardinality-based temperature as a function of $\varepsilon$-valued resources.
        For $G^\star(m, n)$, we set $n = 100$ and $\varepsilon = 0.01$.
        For each \mbox{$\alpha \in \{0.4, 0.45, 0.5\}$}, we plot $\mathrm{Eff}_{\tau^{\mathrm{c}}}(G_B^\star(m, n))$ for \mbox{$m \in \{10, \ldots, 100\}$}, where $\tau^{\mathrm{c}}(G^\star(m, n)) = \frac{(1-\alpha)}{\ln(m + 3)}$.
        The dashed lines show the asymptotic lower bound on efficiency, $\frac{1}{1 + (n-1)(1-\alpha)}$, for the cardinality-based scheme as the number of $\varepsilon$-valued resources grows~$(m)$.
        The plot illustrates \textit{Part~2} of the proof of Theorem~\ref{thm:temperature_selection_arbitrarily_bad}, which shows that if the temperature approaches $0$, then blind agents place increasingly high probability on the $1$-valued resource as $m$ increases.}
        \label{fig:Negative_Result_Part_2}
    \vspace{1mm}
\end{figure}

\section{Conclusion}

We study whether randomization under log-linear learning can balance nominal performance with robustness to compromised communication in multiagent submodular maximization games.
We show that for every game, there exists a temperature such that nominal efficiency is near-optimal while compromised efficiency remains strictly positive.
However, for every temperature selection scheme, there exist games with arbitrarily poor compromised efficiency.
Future work should develop game-dependent randomization rules, identify conditions permitting stronger performance guarantees, and explore whether efficiency improves monotonically as agents regain information under partial communication failures.


\bibliography{ifacconf}             
                                                   

\end{document}